\def\tsc#1{\csdef{#1}{\textsc{\lowercase{#1}}\xspace}}
\newcommand{\significancestatement}[1]{}  %
\newcommand{\correspondingauthor}[1]{}
\newcommand{\authorcontributions}[1]{}
\newcommand{\authordeclaration}[1]{}
\newcommand{\equalauthors}[1]{}
\newcommand\myequationstyle{\displaystyle}
\newcommand\myoptionalpart[2]{#2}
\theoremstyle{plain}
\newtheorem{theorem}{Theorem}
\newtheorem{proposition}{Proposition}
\newtheorem{remark}{Remark}
\newtheorem{example}{Example}
\theoremstyle{definition}
\newcommand{\Gini}{\mathcal{G}}
\newcommand{\Bonf}{\mathcal{B}}
\newcommand{\Hoov}{\mathcal{H}}
\newcommand{\Verg}{\mathcal{V}}
\newcommand{\Pequ}{\mathcal{P}}
\begin{document}

\author[a]{Lucio Bertoli-Barsotti}[orcid=0000-0003-4508-3172]
\ead{lucio.bertoli-barsotti@unibg.it}
\credit{Conceptualisation of this study, Methodology,
Writing}

\author[b,c]{Marek Gagolewski}[orcid=0000-0003-0637-6028]
\ead{marek.gagolewski@pw.edu.pl}
\ead[url]{https://www.gagolewski.com}
\cormark[1]
\cortext[cor1]{Corresponding author}
\credit{Conceptualisation of this study, Methodology,
Data Curation, Investigation, Software,
Writing}

\author[d]{Grzegorz Siudem}[orcid=0000-0002-9391-6477]
\ead{grzegorz.siudem@pw.edu.pl}
\ead[URL]{http://if.pw.edu.pl/~siudem}
\credit{Conceptualisation of this study, Methodology,
Writing}

\author[c]{Barbara Żogała-Siudem}[orcid=0000-0002-2869-7300]
\ead{zogala@ibspan.waw.pl}
\credit{Data Curation, Investigation, Visualisation, Software, Writing}

\shortauthors{Bertoli-Barsotti et al.}

\address[a]{University of Bergamo, Department of Economics, Italy}

\address[b]{Warsaw~University of Technology, Faculty of Mathematics and Information Science, ul. Koszykowa 75, 00-662 Warsaw, Poland}

\address[c]{Systems Research Institute, Polish Academy of Sciences,
ul. Newelska 6, 01-447 Warsaw, Poland}

\address[d]{Warsaw~University of Technology, Faculty of Physics,
ul. Koszykowa 75, 00-662 Warsaw, Poland}

\let\WriteBookmarks\relax
\def\floatpagepagefraction{1}
\def\textpagefraction{.001}

\shorttitle{Equivalence of inequality indices in the three-dimensional model~of~informetric impact}
\title[mode = title]{Equivalence of inequality indices in the three-dimensional model~of~informetric impact}

\begin{abstract}
Inequality is an inherent part of our lives: we see it in the distribution of incomes, talents, citations, to name a few. However, its intensity varies across environments: there are systems where the available resources are relatively evenly distributed but also where a small group of items or agents controls the majority of assets. Numerous indices for quantifying the degree of inequality have been proposed but in general, they work quite differently.

We recently observed (Siudem et al., PNAS 117:13896--13900, 2020) that many rank-size distributions might be approximated by a time-dependent agent-based model involving a mixture of preferential (rich-get-richer) and accidental (sheer chance) attachment. In this paper, we point out its relationship to an iterative process that generates rank distributions of any length and a predefined level of inequality, as measured by the Gini index.

We prove that, under our model, the Gini, Bonferroni, De Vergottini, and Hoover indices are equivalent for samples of similar sizes. Given one of them, we can recreate the value of another measure. Thanks to the obtained formulae, we can also understand how they depend on the sample size. An empirical analysis of a large database of citation records in economics (RePEc) yields a good match with our theoretical derivations.
\end{abstract}

\begin{keywords}
Gini index \sep
Bonferroni index \sep
power law \sep
rich-get-richer \sep
inequality \sep
sensitivity
\end{keywords}

\maketitle

\noindent
Please cite this paper as:
Bertoli-Barsotti, L., Gagolewski, M., Siudem, G., Żogała-Siudem, B., Equivalence of inequality indices in the three-dimensional model of informetric impact, \textit{Journal of Informetrics} \textbf{18}(4), 101566, 2024, \href{https://doi.org/10.1016/j.joi.2024.101566}{DOI:10.1016/j.joi.2024.101566}.

\section{Introduction}

Given a series of measurements, indicators, scores, counts, or any other numeric values, it is natural to order them from the highest to the lowest. This way, we get better insight into the aspects of reality they aim to capture. In particular, various rankings (e.g., of universities, movies, or restaurants; see \citealp{iniguez2022dynamics}) promise to make our lives easier by claiming they can separate seeds from the chaff. Investigation and prediction of the size of the top or otherwise extreme values is crucial in risk analysis or disaster prevention \citep{voitalov2019scale,pickands1975statistical,marshall2007life}. The search for patterns and universalities in sorted data remains a fundamental, multidisciplinary research topic \citep{holme2022universality,Newman2005}. This includes the study of ranking dynamics \citep{iniguez2022dynamics} and the distribution of their static snapshots, from the most straightforward Zipf power-law models \citep{Newman2005} to more complex ones \citep{Petersen2011,PNAS2020,pricepareto,singh2022quantifying}. Rankings are inextricably linked to inequality. Unfortunately, not everyone can stand on the podium; often, the winner takes it all \citep{Perc}, especially in the context of skewed distributions commonly observed in informetrics \citep{deSollaPrice1965,Newman2005}. To quote \citet[p.~308]{rousseau2018becoming} ``the informetric laws describe situations in which a large inequality is present''. For example, long--tailed distributions, such as those described by the power laws \citep{Newman2005} or Pareto-type distributions \citep{arnold2015:pareto,pricepareto}, are typically used in informetrics to model information production processes (IPPs; systems consisting of ``sources'' producing ``items''; \citealp{egghe1990introduction,egghe2005power}) where we observe a high concentration of assets amongst only a few top scorers. A concentration measure is one of the two elements, the other being {\it the production} of the theory on impact \citep{egghe2022rank,egghe2023global}, and the Lorenz order is the most direct way to represent the concentration relationship – which is ultimately a method for establishing rankings among IPPs.

The recently-proposed 3DI model (three dimensions of impact; \citealp{PNAS2020,pricepareto}), can be conceived of as a rank-size approach to describing the mechanisms governing the growth of bibliographic and other networks studied originally by \citet{deSollaPrice1965}.
Namely, consider a process where, in every time step, a system (e.g.,~a citation network, a cluster of internet portals) grows by one entity (e.g., a new paper, a website). In each iteration, we distribute $m$ impact or wealth units (e.g., citations, links) amongst the already-existing entities:
\begin{itemize}[nosep]
\item
$a=(1-\rho)m$ units totally at random,
\item
$p=\rho m$ units according to the preferential attachment rule,
\end{itemize}
with $\rho$ representing
the extent to which the rich-get-richer rule dominates over pure luck.  Inspired by the observation in \citep{bertoli2022equivalent,OurGiniLorenz}, in Section~\ref{sec:giniprocess}, we will show that the $\rho$ parameter naturally corresponds to the value of the Gini index of the resulting ordered sample of impact measures. We will thus indicate the relationship between the degree of preferentiality and inequality.

Different systems or environments naturally have varied sizes and levels of inequality \citep{cowell2000measurement,silber2012handbook}, i.e., what percentage of the top scorers is in possession of the majority of the resources. As the degree of evenness vs monopoly can be measured by different indicators (e.g., the Gini, Bonferroni, or Hover index), a question of which one is the most informative often arises \citep{prathap2014zynergy,prathap2022comments}.
In Section~\ref{sec:inequality}, we will show the equivalence of many popular inequality indices in our model by expressing them in terms of monotone, 1-to-1 functions of the Gini index.

In Section~\ref{sec:experiments}, an analysis of a large number of citation records consisting of research papers in economics (RePEc) will show a quite good match between the empirical data and our theoretical derivations.

\section{The Gini-stable process, rich-get-richer, and random distribution of wealth}\label{sec:giniprocess}

In this section, after introducing some basic notation convention, we recall the iterative affine process proposed by \citet{bertoli2022equivalent}, which \citet{OurGiniLorenz} proved to preserve the Gini index throughout all its iterations. Moreover, it is the only such process, as shown in Section \ref{subsec:uniqueness}. In Section \ref{subsec:3DSI}, we prove that it is equivalent to the 3DI model by \citet{PNAS2020} but in a different parametrisation, given by the Gini index. In Section \ref{subsec:lorenz}, we
connect the normalised vectors $\boldsymbol{p}^{(N,G)}$ with the Lorenz curves.

\subsection{Notation}\label{subsec:notation}

We utilise the following notation.
    The gamma function is given by $\Gamma(z)=\int_0^\infty t^{z-1} e^{-t}\,dt$, $\Gamma(z+1)=z\Gamma(z)$,
    the polygamma functions are defined by
$\psi^{(m)}(z)=\frac{d^{m+1}}{dz^{m+1}}\log \left(\Gamma(z)\right)$,
    the digamma function is $\psi(z)=\psi^{(0)}(z)=\Gamma'(z)/\Gamma(z)$,
    the harmonic number $H_n=\sum_{k=1}^n k^{-1}=\psi(n+1)-\gamma$, where $\gamma \approx 0.577$ is the Euler constant.

\subsection{The derivation of the Gini-stable process}\label{subsec:derivation}

For any fixed parameter $G\in[0,1)$, consider an iterative process discussed
in \citep{OurGiniLorenz} whose update formula features a simple affine function
of the consecutive elements:

\begin{equation}\label{eq:process}
p_k^{(N,G)} = a_N(G) + b_N(G)\, p_k^{(N-1,G)}, \qquad k=1,\dots,N,
\end{equation}

\noindent
for $N\geq 2$, under the assumption that $p_N^{(N-1,G)} = 0$ and $p_1^{(1,G)}=1$
and:

\begin{eqnarray}
a_N(G) &=& \frac{1-G}{G(N-2)+1} \frac{1}{N}, \label{eq:at}\\
b_N(G) &=& \frac{G(N-1)}{G(N-2)+1}. \label{eq:bt}
\end{eqnarray}

\noindent
Then, for any $N\ge 2$,
$\boldsymbol{p}^{(N,G)}=\left(p_1^{(N,G)},\dots,p_N^{(N,G)}\right)$
is an \emph{ordered normalised positive $N$-vector}, i.e.,
$\sum_{k=1}^N p_k^{(N,G)}=1$
and
$1\ge p_1^{(N,G)}\ge p_2^{(N,G)}\ge\dots\ge p_N^{(N,G)}\ge 0$.

\begin{figure}[t!]

\centering
\includegraphics[width=\myoptionalpart{0.95}{0.7}\linewidth]{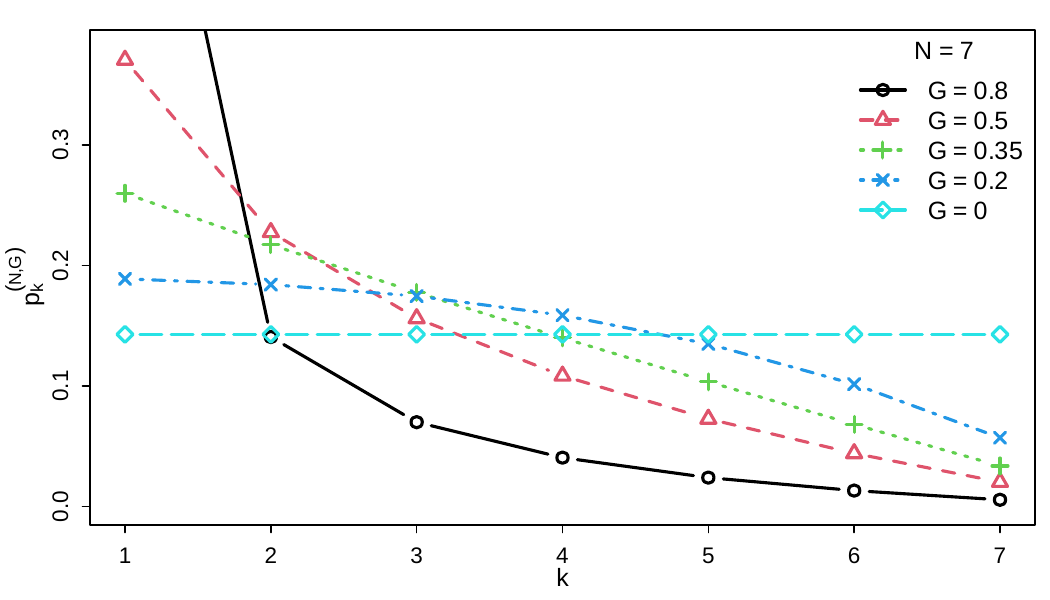}

\caption{\label{fig:model-vectors} Example ordered normalised $N$-vectors
generated using the affine update formula given by Eq.~(\ref{eq:process}).
Parameter $G$ controls the level of inequality, as measured
by the Gini index.}
\end{figure}

\begin{example}
Here are two example outcomes for $G=0.25$ and $G=0.75$
up to $N=4$:

\begin{center}
\begin{tabular}{lll}
      & $\boldsymbol{p}^{(N,0.25)}$             & $\boldsymbol{p}^{(N,0.75)}$  \\
$N=2$ & $( 0.625, 0.375 )$                      & $( 0.875, 0.125 )$  \\
$N=3$ & $( 0.450, 0.350, 0.200 )$               & $( 0.798, 0.155, 0.048 )$  \\
$N=4$ & $( 0.350, 0.300, 0.225, 0.125 )$        & $( 0.743, 0.164, 0.068, 0.025 )$  \\
\end{tabular}
\end{center}
\end{example}

There exists an explicit formula for the individual components
of the normalised vectors. Namely, we can show that (see
Sec.~\ref{subsec:3DSI} and \citealp{OurGiniLorenz}):

\begin{equation}\label{eq:pk}
p_k^{(N,G)}=
\begin{cases}
\myequationstyle
\frac{1}{N} \frac{1-G}{2G-1} \left(
\frac{\Gamma(N+1)}{\Gamma(N+1+1/G-2)} \frac{\Gamma(k+1/G-2)}{\Gamma(k)}
-1
\right),
&
\myoptionalpart{}{\text{if }} G\neq \frac{1}{2},
\\
\myequationstyle
\frac{1}{N}  \left(
H_N-H_{k-1}
\right)=\frac{1}{N}\sum_{i=k}^N\frac{1}{i},
&
\myoptionalpart{}{\text{if }} G = \frac{1}{2}.
\\
\end{cases}
\end{equation}

\subsection{Gini index is exactly $G$.}\label{subsec:gini}

Figure~\ref{fig:model-vectors} depicts a few example normalised vectors
$\boldsymbol{p}^{(7,G)}$ for different $G$s.
We see that the $G$ parameter controls the level of inequality
of the data distribution: $G=0$ yields all elements equal to $1/N$,
and, as $G\to 1$, all mass is transferred to the first element.

It turns out that our process generates ordered normalised
vectors whose Gini's index, $\Gini\left(\boldsymbol{p}^{(N,G)}\right)$ (\citealp{Gini1912:index}; see also, e.g., \citealp*{ginimeth}),
is exactly equal to $G$. Namely, for
$N\ge 2$ and $G\neq \frac{1}{2}$
we have:

\begin{eqnarray*}
\Gini\left(p_1^{(N,G)},\dots,p_N^{(N,G)}\right)
&=&
  \frac{1}{N-1} \sum_{k=1}^N (N-2k+1) p_{k}^{(N,G)}
\\
&=&
\myequationstyle
\frac{1-G}{(N-1)(2G-1)}
\frac{G (2 G-1) (N-1)}{1-G}
= G,
\end{eqnarray*}

\noindent
and  for $G=\frac{1}{2}$ it holds:

\begin{eqnarray*}
\Gini\left(p_1^{(N,1/2)},\dots,p_N^{(N,1/2)}\right)
= \frac{(N-1)N}{2(N-1)N}=\frac{1}{2}.
\end{eqnarray*}

\noindent
Thus, quite remarkably, $N$ (size) and $G$ (inequality)
are two \textit{independent} parameters in our model.
This is why we refer to the above as the \emph{Gini-stable process}.

\subsection{It is the only affine process of this kind.}\label{subsec:uniqueness}
Our process transforms an ordered normalised vector
to a new, longer, ordered normalised vector of the same Gini index.
What is more, the aforementioned coefficients $a_N$ and $b_N$ make up
the only affine transformation that achieves this property.

\begin{proposition}
For any $N\ge 3$ and  $G\in[0,1)$, let
$\boldsymbol{q}^{(N-1)}$
be an ordered probability $(N-1)$-vector
with the Gini index of $\Gini\left(q_1^{(N-1)},\dots,q_{N-1}^{(N-1)}\right) = G$.
Then, generating the normalised $N$-vector
$\boldsymbol{q}^{(N)}$ using the update formula
given by Eq.~(\ref{eq:process})
yields $\Gini\left(q_1^{(N)},\dots,q_N^{(N)}\right) = G$
if and only if $a_N$ and $b_N$ are given exactly by
Eq.~(\ref{eq:at}) and Eq.~(\ref{eq:bt}), respectively.
\end{proposition}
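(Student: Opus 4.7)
The plan is to recast the requirement ``$\boldsymbol{q}^{(N)}$ is a probability $N$-vector with Gini index equal to $G$'' as a pair of linear equations in the two unknowns $a_N$ and $b_N$, and then to verify that the resulting linear system has a unique solution coinciding exactly with Eqs.~(\ref{eq:at}) and (\ref{eq:bt}). Since the system turns out to be non-degenerate, both the ``if'' and ``only if'' directions will be settled in one sweep.

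First I would expand $\sum_{k=1}^N q_k^{(N)} = \sum_{k=1}^N \bigl(a_N + b_N\, q_k^{(N-1)}\bigr)$. Using the padding convention $q_N^{(N-1)}=0$ together with $\sum_{k=1}^{N-1} q_k^{(N-1)}=1$, the probability condition collapses to the simple linear relation $N\, a_N + b_N = 1$.

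Next I would exploit the linearity of the weighted-sum form $\Gini(\boldsymbol{q}^{(N)}) = \tfrac{1}{N-1}\sum_{k=1}^N (N-2k+1)\,q_k^{(N)}$. The constant term drops out because $\sum_{k=1}^N (N-2k+1) = 0$, leaving $\Gini(\boldsymbol{q}^{(N)}) = \tfrac{b_N}{N-1}\sum_{k=1}^N (N-2k+1)\,q_k^{(N-1)}$. The key calculation is then to evaluate this inner sum: discarding the zero last entry and splitting $(N-2k+1) = \bigl((N-1)-2k+1\bigr) + 1$ for $k=1,\dots,N-1$ produces $(N-2)\,\Gini(\boldsymbol{q}^{(N-1)})$, after accounting for the length-$(N-1)$ Gini normalization $\tfrac{1}{N-2}$, plus $\sum_{k=1}^{N-1} q_k^{(N-1)} = 1$. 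Substituting $\Gini(\boldsymbol{q}^{(N-1)}) = G$ and imposing $\Gini(\boldsymbol{q}^{(N)}) = G$ yields the single linear equation $\tfrac{b_N}{N-1}\bigl((N-2)G + 1\bigr) = G$, which already determines $b_N$ as in Eq.~(\ref{eq:bt}).

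Plugging this $b_N$ back into $N a_N + b_N = 1$ recovers Eq.~(\ref{eq:at}) for $a_N$, completing the proof. The main obstacle I anticipate is the routine but fiddly bookkeeping caused by the two different Gini normalizations, $\tfrac{1}{N-1}$ on the new vector versus $\tfrac{1}{N-2}$ on the old one; no deeper ingredient appears to be required. A brief sanity check that $b_N>0$ and $a_N\ge 0$ for $G\in[0,1)$ also guarantees that the resulting $\boldsymbol{q}^{(N)}$ remains non-increasing and non-negative, so that the update legitimately stays within the space of ordered probability vectors.
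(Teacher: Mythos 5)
Your proposal is correct and follows essentially the same route as the paper's own proof: reduce the Gini condition to the linear equation $\frac{b_N}{N-1}\bigl((N-2)G+1\bigr)=G$ via the weighted-sum form (with the constant term vanishing since $\sum_{k=1}^N(N-2k+1)=0$), then recover $a_N=(1-b_N)/N$ from the normalisation constraint. The added sanity check on non-negativity and monotonicity is a nice touch but not needed for the equivalence itself.
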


\begin{proof}
We have
$\Gini\left(q_1^{(N-1)},\dots,q_{N-1}^{(N-1)}\right)=
\frac{1}{N-2} \sum_{k=1}^{N-1} (N-2k) q_{k}^{(N-1)}
=G
$.
Moreover,
\begin{eqnarray*}
\Gini\left(q_1^{(N)},\dots,q_{N}^{(N)}\right)
&=&
\frac{1}{N-1} \sum_{k=1}^N (N-2k+1) q_{k}^{(N)}
\\
&=&
\frac{1}{N-1} \sum_{k=1}^N (N-2k+1) (a_N+b_N q_{k}^{(N-1)})
\\
&=&
\myoptionalpart{}{
\frac{a_N}{N-1} \sum_{k=1}^N (N-2k+1)
+
\frac{b_N}{N-1} \sum_{k=1}^{N-1} (N-2k+1)  q_{k}^{(N-1)}
\\
&=&
}
0
+
\frac{b_N}{N-1} \sum_{k=1}^{N-1} (N-2k)  q_{k}^{(N-1)} + \frac{b_N}{N-1}
\\
&=&
b_N \left( \frac{N-2}{N-1} G + \frac{1}{N-1} \right)
\end{eqnarray*}
is equal to $G$
if and only if
$b_N = \frac{G (N-1)}{G (N-2)+ 1 }$,
which is exactly Eq.~(\ref{eq:bt}).

We also need to have $\sum_{k=1}^N p_{k}^{(N)} = 1$.
But after some simple transformations,
one can show that this holds if and only if $a_N = (1 - b_N)/N$,
which corresponds to Eq.~(\ref{eq:at}).
\end{proof}

\subsection{Relation to the 3DI model.}\label{subsec:3DSI}
To strengthen the model's underpinnings,
let us return to the 3DI (three dimensions of impact) model \citep{PNAS2020}
mentioned in the introduction.
Let $X_k(t)$ denote the impact of the $k$-th richest entity at time step $t$
(e.g., the number of citations to the $k$-th most cited paper).
We assume ${X}_k(k-1)=0$ for every $k$, i.e., the $k$-th object enters the
system with no impact units.
The update formula in our model is a mixture of the accidental and rich-get-richer components governed by the $\rho$ parameter\footnote{\citet{PNAS2020,pricepareto} only studied the case of $\rho\in(0,1)$.
However, let us note that $\rho<0$ is not only possible but also has a nice interpretation
\citep{gagolewski2022ockham,bertoli2022equivalent}. In such a scenario,
we initially distribute more than the assumed $m$ citations at random,
but then we take away from those who are already rich (rich get less).}:

\begin{equation}
{X}_k{(t)} =
\underbrace{{X}_k{(t-1)}}_{\mathrm{previous\; value}}+
\underbrace{\frac{a}{t}}_{\mathrm{accidental\;income}}+
\underbrace{
    p\,\frac{{X}_k{(t-1)}+\frac{a}{t}}{ (t-1)m + a}}_{\mathrm{preferential\;gain\;or\;loss}
},\label{eq:master}
\end{equation}

\noindent
where $a=(1-\rho)m$ and $p=\rho m$. Let us note that there is some similarity between the preference/randomness structure in the 3DI model and the SJR indicator \citep{gonzalez2010new} which is a modification of the PageRank algorithm by \citet{brinpage}.

If we assume $\rho=0$, then we are only left with the accidental component,
and our model reduces to the harmonic one (compare \citealp{cena2022validating}):

\begin{equation*}
   {X}_k(t) = m \sum\limits_{i=k}^t\frac{1}{i} = m \left(H_t-H_{k-1}\right).
\end{equation*}

\noindent
Note that ``purely accidental'' does not mean that every entity ends up with the same amount of wealth, as older agents have had more opportunities to become impactful (``the old get richer'').
On the other hand, for $\rho<1$ and $\rho\neq 0$, the solution is:

\begin{eqnarray}
X_k(t)
&=&
m\frac{1-\rho}{\rho}
\myequationstyle
\left(
\frac{\Gamma(t+1)}{\Gamma(t+1-\rho)}\frac{\Gamma(k-\rho)}{\Gamma(k)} - 1
\right)
\nonumber
\\
&=&
m\frac{1-\frac{1}{2-\rho}}{\frac{2}{2-\rho}-1}
\myequationstyle
\left(
\frac{\Gamma(t+1)}{\Gamma\left(t-1+(2-\rho)\right)}\frac{\Gamma\left(k-2+(2-\rho)\right)}{\Gamma(k)} - 1
\right).
\label{eq:ikska}
\end{eqnarray}

\noindent
We, therefore, note that:

\begin{equation}
X_k(t) = m t p_{k}^{(t,G)}
\end{equation}
with:
$$\rho(G) = 2-\frac{1}{G}=\frac{2G-1}{G},$$
or, equivalently:
$$G(\rho)=\frac{1}{2-\rho},$$ and $p_k^{(t,G)}$ given by Eq.~(\ref{eq:process}).

\smallskip
We have thus established an aesthetically pleasing connection between the
3DI model \citep{PNAS2020} and the Gini-stable process \citep{OurGiniLorenz},
and hence the degree of randomness in the impact distribution and the Gini index.
Let us also note that \citet{OurGiniLorenz} have also studied the asymptotic
behaviour of the Lorenz curves generated by this model,
and have shown its relationship to the Pareto Type~II,
exponential, and scaled beta distributions \citep{arnold2015:pareto,pickands1975statistical,marshall2007life}.

\subsection{Lorenz curves}\label{subsec:lorenz}
For a given nonincreasing $N$-vector $\boldsymbol{y}=(y_1,\dots,y_N)$ with
$\sum_{i=1}^N y_i=C$, let $I_{\boldsymbol{y}}(u)$, $0\leqslant u \leqslant 1$, be the continuous curve connecting the origin $(0, 0)$ with the point $(1,C)$ and obtained by joining  the points $\left(i/N, \sum_{j=1}^i y_{N-j+1}\right)$
by line segments, where $i=1,\,2,\,\dots,\,N$.
The function $I_{\boldsymbol{y}}(u)$ may be viewed as an instance of a non-normalised Lorenz curve \citep{egghe2022rank,egghe2023global,egghe2023global2}, also referred to a generalised Lorenz curve  \citep{rousseau2011lorenz}.
In our model, we know the explicit expression for the function $I_{\boldsymbol{p}^{(N,G)}}(u)$ because the cumulative sums of elements in $\boldsymbol{p}^{(N,G)}$ can be written in closed-form.  Generalised Lorenz curves can then be used to define a class of global impact measures, understood as functions that preserve the non-normalised dominance order $\prec$ as defined by \citet{egghe2023global,egghe2023global2} in such a way that $\boldsymbol{p}^{(N,G)}\prec \boldsymbol{p}^{(N',G')}$ holds if and only if $I_{\boldsymbol{p}^{(N,G)}}(u)\leqslant I_{\boldsymbol{p}^{(N',G')}}(u)$ for every $u\in(0,1)$. A function $\gamma$ is called a global impact measure if, for any pair of $N$-tuples $y_i=C p_i^{(N,G)}$ and $y'_i=C'p_i^{(N,G')}$, it satisfies the implication:

\begin{equation*}
    I_{\boldsymbol{y}}(u)\leqslant I_{\boldsymbol{y}'}(u)\;\;\; \Rightarrow\;\;\;\gamma(\boldsymbol{y})\leqslant \gamma(\boldsymbol{y}').
\end{equation*}

\noindent
Note that if $G=G'$ and $C\neq C'$, then the vectors $\boldsymbol{y}$ and $\boldsymbol{y}'$  have the same Lorenz curves, but different non-normalised Lorenz curves. As well known, the Lorenz ordering is invariant to scale transformations  \citep{egghe1991transfer,rousseau1992concentration}. Since in the present study we limit our attention to concentration (and not global impact) measures, it is sufficient to only consider the Lorenz curve in its normalised version. Lorenz curves are the basis of a concentration theory and related acceptable measures of concentration (\citealp[p.~310]{rousseau2018becoming}; \citealp{egghe2001symmetric}).

\section{Measuring Inequality}\label{sec:inequality}

\paragraph{Majorisation order.}

Given two $N$-vectors $\boldsymbol{p}$ and $\boldsymbol{p}'$
with $\sum_{i=1}^N p_i=\sum_{i=1}^N p_i'$,
we define the \emph{majorisation order} $\preceq$
in such a way that $\boldsymbol{p} \preceq \boldsymbol{p}'$ if and only if
for all $k=1,\dots,N$ it holds $F_k \le F_k'$,
where $F_k = \sum_{i=1}^k p_{[i]}$ and $F_k' = \sum_{i=1}^k p_{[i]}'$
are the sums of the $k$ greatest elements in
$\boldsymbol{p}$ and $\boldsymbol{p}'$, respectively \citep{marshall2011majorisation}.
In other words, it is the extension of the standard componentwise relation, $\le$,
applied over the consecutive cumulative sums of ordered items.
Let us note that in our model, such cumulative sums
can be expressed using the following simple formula:

\begin{eqnarray}\label{eq:cdf}
&&F_k^{(N,G)}=\sum_{i=1}^k p_i^{(N,G)}
\myoptionalpart{
    \\
    &=&
    \nonumber
}{
    =
}
\begin{cases}
\myequationstyle
\frac{1-G}{2G-1} \left(
\frac{G}{1-G}\frac{\Gamma(N)}{\Gamma(N-1+1/G)} \frac{\Gamma(k-1+1/G)}{\Gamma(k)}
-\frac{k}{N}
\right),
&
\myoptionalpart{}{\text{if }}
G\neq\frac{1}{2},\\
\displaystyle\frac{k}{N} \left(
1+H_N-H_k
\right),
&
\myoptionalpart{}{\text{if }}
G=\frac{1}{2}.
\\
\end{cases}
\end{eqnarray}

\paragraph{Monotonicity w.r.t.~the majorisation order.}
We can show that for any $N$, $\boldsymbol{p}^{(N,G)}$ as a function of $G$
is monotone with respect to the majorisation order.

\begin{theorem}\label{Thm:majorisation}
For any $G \le G'$ and $N$, it holds $\boldsymbol{p}^{(N,G)} \preceq \boldsymbol{p}^{(N,G')}$.
\end{theorem}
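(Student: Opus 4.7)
I plan to proceed by induction on $N$, leveraging the affine update formula (\ref{eq:process}) together with the identity $a_N(G) = (1-b_N(G))/N$ (already used in the proof of the preceding proposition), which turns the recurrence on cumulative sums into
\[
F_k^{(N,G)} \;=\; \frac{k}{N} \;+\; b_N(G)\left(F_k^{(N-1,G)} - \frac{k}{N}\right), \qquad 1 \le k \le N-1,
\]
with $F_N^{(N,G)} = 1$. The virtue of this form is that all the $G$-dependence is collected into the single scalar $b_N(G)$ multiplying a factor that is non-negative and, inductively, monotone in~$G$.

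\textbf{Base case and inductive step.} For $N = 2$, the identity $\Gini(\boldsymbol{p}^{(2,G)}) = G$ combined with $p_1^{(2,G)}+p_2^{(2,G)}=1$ forces $p_1^{(2,G)} = (1+G)/2$, so $F_1^{(2,G)}$ is strictly increasing in $G$ and $F_2^{(2,G)} = 1$. Now fix $G \le G'$ and assume the claim for $N-1$. For $1\le k \le N-1$, subtracting the two recurrences and adding and subtracting the quantity $b_N(G')(F_k^{(N-1,G)} - k/N)$ yields the telescoping
\[
F_k^{(N,G')} - F_k^{(N,G)} \;=\; b_N(G')\bigl[F_k^{(N-1,G')} - F_k^{(N-1,G)}\bigr] \;+\; \bigl[b_N(G') - b_N(G)\bigr]\left(F_k^{(N-1,G)} - \tfrac{k}{N}\right).
\]
The first bracket is $\ge 0$ by the inductive hypothesis, while $b_N(G') \ge 0$. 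For the second term, a one-line differentiation gives $b_N'(G) = (N-1)/(G(N-2)+1)^2 > 0$, so $b_N$ is strictly increasing on $[0,1)$; and $F_k^{(N-1,G)} \ge k/(N-1) \ge k/N$, because the top-$k$ partial sum of any ordered probability $(N-1)$-vector is at least the proportional share $k/(N-1)$. Thus every summand is non-negative and the inductive claim follows. The remaining case $k = N$ is immediate since $F_N^{(N,G)} = F_N^{(N,G')} = 1$.

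\textbf{Where the real work sits.} The statement involves two $G$-dependent ingredients pulling in opposite directions ($a_N$ decreasing, $b_N$ increasing in $G$), so a direct attack comparing $\boldsymbol{p}^{(N,G)}$ and $\boldsymbol{p}^{(N,G')}$ through the explicit formula (\ref{eq:pk}) would require a case split at $G = 1/2$ together with some delicate monotonicity estimates for ratios of Gamma functions. The gain of the inductive route proposed above is to sidestep all of this: once the update is rewritten as an interpolation between the uniform value $k/N$ and $F_k^{(N-1,G)}$, all signs align automatically, and the only non-trivial ingredient reduces to the monotonicity of the scalar coefficient $b_N(G)$, which is essentially free.
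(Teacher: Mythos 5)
Your proof is correct, and it is worth noting that the paper itself does not contain a proof of Theorem~\ref{Thm:majorisation}: it defers to the companion work \cite{OurGiniLorenz}, where the result is obtained by analysing the Gini-stable process through the more general machinery of Lorenz ordering and its limiting Lorenz curves. Your argument is therefore a genuinely different — and considerably more elementary — route: a direct induction on $N$ entirely inside the discrete recurrence. The three ingredients all check out. First, summing Eq.~(\ref{eq:process}) over $i\le k$ and using $a_N=(1-b_N)/N$ does give $F_k^{(N,G)}=\tfrac{k}{N}+b_N(G)\bigl(F_k^{(N-1,G)}-\tfrac{k}{N}\bigr)$ for $k\le N-1$ (and $F_N^{(N,G)}=1$ via $p_N^{(N-1,G)}=0$). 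Second, $b_N'(G)=(N-1)/(G(N-2)+1)^2>0$ is a correct computation, so $b_N(G')\ge b_N(G)\ge 0$. Third, the inequality $F_k^{(N-1,G)}\ge k/(N-1)\ge k/N$ follows because the vectors are sorted decreasingly, so the average of the top $k$ entries is at least the overall average $1/(N-1)$ — a fact the paper licenses by asserting that $\boldsymbol{p}^{(N-1,G)}$ is an ordered probability vector. With those pieces, your telescoped difference is a sum of two non-negative terms, and the base case $p_1^{(2,G)}=(1+G)/2$ is right. What your approach buys is a short, self-contained proof requiring no case split at $G=1/2$, no Gamma-function ratios, and no appeal to the external reference; what the Lorenz-curve route buys in \cite{OurGiniLorenz} is the stronger asymptotic picture (the connection to Type~II Pareto and related distributions), which your induction does not recover but which the theorem as stated does not need.
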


\noindent
The proof is given by \citet{OurGiniLorenz}, where they discuss the Gini-stable process in the context of the more general Lorenz ordering.

\paragraph{Inequality indices.}
Let us now consider any function $\phi$ that maps the ordered normalised
$N$-vectors to the set of real numbers.
We say that $\phi$ is \textit{Schur-convex}, if for any
$\boldsymbol{p} \preceq \boldsymbol{p}'$ it holds
${\phi}(\boldsymbol{p})\le{\phi}(\boldsymbol{p}')$.
This is equivalent to $\phi$'s being increasing in each $F_1,\dots,F_N$
when re-expressed in terms of cumulative sums of ordered elements;
see \citep{BeliakovETAL2016:penaltyinequality}.

Any Schur-convex function normalised such that $\phi(1,0,\dots,0)=1$
and $\phi(1/N,\dots,1/N)=0$ is called an \textit{inequality index};
see \citep{shorrocks1987transfer} and \citep{marshall2011majorisation,lambert2001distribution,chantreuil2011inequality,zheng2007unit,bosmans2016consistent}. The Gini index is an example function fulfilling these properties. Below we recall some other noteworthy inequality indices (e.g., \citealp{ciommi,mehran,imedioolmedo,parasite,%
BeliakovETAL2016:penaltyinequality}).
Then, we derive the formulae for different inequality indices as functions of $N$ and $G$ in our model. This will enable us to compute one index based on any other one.

\paragraph{Bonferroni's index.}
\label{subsec:bonferroni-index}

For any normalised $N$-vector, the Bonferroni index
\citep{Bonferroni1930:index} is given by:
\begin{align}\nonumber
\Bonf(p_1,\dots,p_N) =&\nonumber
\frac{
N \sum_{i=1}^{N}  \left( 1-\sum_{j=1}^i \frac{1}{N-j+1} \right) p_{i}
}{
N-1 %
}
\myoptionalpart{
    \\
    =&
}{
    =
    \frac{
    N \sum_{i=1}^{N-1} \frac{1}{N-i} \sum_{j=1}^i p_j
    }{
    N-1 %
    }
    - N
    + \frac{N}{N-1} \sum_{i=1}^N \left(1-\frac{1}{N-i+1}\right)
    \\
    =&\frac{
    N \sum_{i=1}^{N} \frac{1}{i} \sum_{j=1}^{N-i} p_j
    }{
    N-1 %
    }
    + \frac{N\left(1-\sum_{j=1}^N\frac{1}{j}\right)}{N-1}
    =
}
\frac{
N}{
N-1}\left(
\sum_{k=1}^{N-1} \frac{F_k}{N-k}
+ 1-H_N\right).\nonumber%
\end{align}

\noindent
Substituting  $F_k=F_k^{(N,G)}$ from Eq.~(\ref{eq:cdf}) for $\frac{1}{2}\neq G \in (0,1)$, we get:

\begin{align}\nonumber
\Bonf\left(\boldsymbol{p}^{(N,G)}\right)
\myoptionalpart{=}{
=&\nonumber
\frac{
N}{
(N-1)(2G-1)}\left(
G \sum_{k=1}^{N-1}
\frac{\Gamma(N)\Gamma(k-1+1/G)}{(N-k)\Gamma(N-1+1/G)\Gamma(k)}
-\frac{1-G}{N}\sum_{k=1}^{N-1} \frac{k}{N-k}
+ 1-H_N\right)\\ \nonumber
=&\frac{ %
N}{
N-1}\left(
\frac{G}{2G-1}\left(  H_{N+1/G-2}-H_{1/G-1} \right)
-\frac{1-G}{(2G-1)N}\left(
N H_{N-1}-N+1
\right)
+ 1-H_N
\right)\\\nonumber
=&\frac{
N}{
N-1}\left(
\frac{G}{2G-1}\left(  H_{N+1/G-2}-H_{1/G-1}  \right)
-\frac{1-G}{2G-1}\left(
 H_{N}-1
\right)
+ 1-H_N
\right)\\
=&\frac{
N}{
N-1}\frac{G}{2G-1}\left(
 H_{N+1/G-2}-H_{1/G-1}
-H_{N}+1
\right)\nonumber%
\\
=&
}
\frac{
N}{
N-1}\sum_{k=2}^N\frac{1}{k(k+1/G-2)}.\nonumber%
\end{align}

\noindent
For $G=\frac{1}{2}$, we get:

\begin{equation}
    \Bonf\left(\boldsymbol{p}^{(N,1/2)}\right) =\frac{
N}{
N-1}\sum_{k=2}^N\frac{1}{k^2}.\nonumber%
\end{equation}

\begin{remark}
Unlike in the Gini index's case,
the Bonferroni index depends on the vector length $N$.
However, in the limit as $N\to\infty$, we have:

\begin{equation*}
    \Bonf\left(\boldsymbol{p}^{(N,G)}\right)
    \stackrel{N\to\infty}{\longrightarrow}
    \begin{cases}
    \displaystyle\frac{G}{2G-1}\left(1-H_{1/G-1}\right), & \mathrm{for\; }G\neq\frac{1}{2}, \\
    \quad\\
    \displaystyle\frac{\pi^2-6}{6}, & \mathrm{for\; }G=\frac{1}{2}. \\
    \end{cases}
\end{equation*}
\end{remark}

\paragraph{De Vergottini's index.}
\label{subsec:vergottini-index}

The De~Vergottini index \citep{verg1,verg2} is defined as:

\begin{align}\nonumber
\Verg(p_1,\dots,p_N) = &
\frac{1}{\sum_{i=2}^N \frac{1}{i}} \left(
    \sum_{k=1}^N \frac{F_k}{k}   - 1
\right)
\myoptionalpart{
\\
=&
}{
=
\frac{1}{\sum_{i=2}^N \frac{1}{i}} \left(
   \sum_{i=1}^N  \sum_{j=i}^{N} \frac{p_i}{j}   - 1
\right)
\\
=&
\frac{1}{H_N-1}
    \sum_{j=1}^N p_j \left(H_N-H_{j-1}-1\right)
=
}
1-  \frac{\sum_{k=1}^N p_k H_{k-1}}{H_N-1}
     .\nonumber%
 \end{align}

\noindent
Computing the value of the following sum in our model
(taking $p_k=p_k^{(N,G)}$ from Eq.~(\ref{eq:pk})
with $\frac{1}{2}\neq G \in (0,1)$):

\begin{align*}
\sum_{k=1}^N p_k^{(N,G)} H_{k-1} =&
\myequationstyle
H_N-1 +\frac{G\left(\frac{G-1}{N}+1-2G\right)}{(G-1)(1-2G)}-\frac{\Gamma(1/G-2)\Gamma(N)}{\Gamma(N-1+1/G)},
\end{align*}

\noindent
leads to:

\begin{align*}
    \Verg\left(\boldsymbol{p}^{(N,G)}\right)= %
\myequationstyle
     \frac{1}{H_N-1}\left(\frac{\Gamma(1/G-2)\Gamma(N)}{\Gamma(N-1+1/G)}+\frac{G}{(2G-1)N}-\frac{G}{G-1}\right).
\end{align*}

\noindent
Furthermore, for $G=\frac{1}{2}$, we get:

\begin{equation*}
    \Verg\left(\boldsymbol{p}^{(N,1/2)}\right) = \frac{N-H_N}{N
    \left(H_N-1\right)}.
\end{equation*}

\begin{remark}\label{remark:vinfty}
For $G<1$, the De Vergottini index in our model converges to zero as $N\to\infty$.
However, the convergence rate is extremely slow;
compare Figure~\ref{fig:GvsOthers}.
\end{remark}

\paragraph{Hoover's index.}
\label{subsec:hoover-index}

The Hoover index (\citealp{hoover}; also known as the Robin Hood index)
can be thought of as the normalised Manhattan distance
to the perfectly equal vector:

\begin{equation*}
\Hoov(p_1,\,\dots,\,p_N)=
\frac{N}{2(N-1)} \sum_{k=1}^N\left|p_{k}-\frac{1}{N}\right|.
\end{equation*}

\noindent
We can simplify it by determining:

\begin{equation}
\nu = \max\left\{j: p_j \ge \frac{1}{N} \right\},
\label{eq:nu}
\end{equation}

\noindent
and then writing:

\begin{align*}
\myoptionalpart{&}{}
    \Hoov(p_1,\,\dots,\,p_N)
\myoptionalpart{\\}{}
=&
\myequationstyle
\frac{N}{2(N-1)}\left(
\displaystyle\sum _{k=1}^{\nu}\left(p_k-\frac{1}{N}\right)-\displaystyle\sum _{k=\nu+1}^N\left(p_k-\frac{1}{N}\right)
\right)
\\
=&
\myequationstyle\myoptionalpart{\scriptsize}{}
\frac{N}{2(N-1)}\left(
\displaystyle\sum _{k=1}^{\nu}\left(p_k-\myequationstyle\frac{1}{N}\right)
-
\left(
    \underbrace{\displaystyle\sum _{k=1}^N\left(p_k-\myequationstyle\frac{1}{N}\right)}_{=0}
    -
    \displaystyle\sum _{k=1}^\nu\left(p_k-\myequationstyle\frac{1}{N}\right)
\right)
\right)
\myoptionalpart{\\ =&}{=}
\frac{1}{N-1} \left(N F_{ \nu }-  \nu  \right).
\end{align*}

\noindent
Moreover, for $\frac{1}{2}\neq G\in(0,1)$, the above  results in:
\begin{align*}
\Hoov\left(\boldsymbol{p}^{(N,G)}\right)
=&\myequationstyle\frac{G}{2G-1}\left(
\frac{N}{N-1}\frac{\Gamma(N)}{\Gamma(N-1+1/G)} \frac{\Gamma( \nu -1+1/G)}{\Gamma( \nu )}
-\frac{ \nu }{N-1}\right).\label{eq:H_solution}
\end{align*}

Please note that  $\nu$ is uniquely determined by the values of $G$ and $N$. Thus, the above does not introduce new parameters to the model.

\begin{remark}\label{rem:PP}
For large $N$ and $G\neq \frac{1}{2}$,
we can express the solution of the continuous approximation
to Eq.~(\ref{eq:nu}), i.e.,
$p_\nu^{(N,G)}=1/N$, as:
\begin{equation*}
  \nu\approx N \left(\frac{1-G}{G}\right)^{G/(2G-1)};
\end{equation*}

\noindent see \citet{pricepareto} for a proof using a slightly different notation
in the case of $G>\frac{1}{2}$, which can be easily extended to  $G<\frac{1}{2}$. Thanks to this result, we obtain a compact asymptotic formula for the Hoover index:

\begin{align*}
    \Hoov\left(\boldsymbol{p}^{(N,G)}\right)
    \stackrel{N\to\infty}{\longrightarrow}&
    \myequationstyle
    \frac{G}{2G-1}\left(
\left(\frac{1-G}{G}\right)^{(1-G)/(2G-1)}
-\left(\frac{1-G}{G}\right)^{G/(2G-1)}\right)
\myoptionalpart{\\ =&}{=}
\left(\frac{1-G}{G}\right)^{(1-G)/(2G-1)}.
\end{align*}

\noindent
Also note that for $G=\frac{1}{2}$, we can obtain $\nu$ using
the Euler--Maclauren formula, which yields:

\begin{equation*}
    p_k^{(N,1/2)}\approx \frac{1}{N}\left(\frac{1}{2k}+\frac{1}{2N}-\log(k/N) \right).
\end{equation*}

\noindent
This allows us to solve equation $p^{(\nu,1/2)}=1/N$ through
the principal branch of the Lambert $\mathcal{W}$ function,
denoted $\mathcal{W}_0$, leading to:

\begin{align*}
    \nu
    \approx &
    \frac{1}{2}\left( \mathcal{W}_0\left(\myequationstyle\frac{\exp\left(1-\frac{1}{2 N}\right)}{2 N}\right)\right)^{-1}
    =
    \myequationstyle
    \frac{1+e}{2 e}+ \frac{N}{e}+\frac{1-e^2}{8 e N}+O\left(N^{-2}\right)
    \myoptionalpart{\\ =&}{=}
    \frac{(2N+1)e^{-1}+1}{2 }+O\left(N^{-1}\right),
\end{align*}

\noindent
leading us to:

\begin{align*}
    \Hoov\left(\boldsymbol{p}^{(N,1/2)}\right)
    \stackrel{N\gg 1}{\approx} &
    \myequationstyle
    \frac{N}{N-1}  F^{(N,1/2)}_{
    \left\lfloor \frac{2Ne^{-1}+e^{-1}+1}{2}  \right\rfloor
    }
    -
    \frac{1}{N-1}
    \left\lfloor \frac{2Ne^{-1}+e^{-1}+1}{2}  \right\rfloor
    \myoptionalpart{\\}{}
    \stackrel{N\to\infty}{\longrightarrow}
    \myoptionalpart{&}{}
    e^{-1}.
\end{align*}

\end{remark}

\paragraph{The $P_q$ indices.}
\label{subsec:pq-index}

For a fixed $q\in(0,1)$, the $P_q$ index is defined as:

\begin{equation}\label{Eq:pq}
\Pequ_q(p_1,\,\dots,\,p_N)=\frac{\sum_{j=1}^{\lfloor q
N\rfloor}p_j-q}{1-q}=\frac{F_{\lfloor qN \rfloor}-q}{1-q}.
\end{equation}

\noindent
This index is a normalised version of the percentage of accumulated
mass in $q 100\%$ of the top elements (compare the famous Pareto 80/20 rule).
Note that $q$ is as parameter of the index, and not a new parameter of the model;
in this sense, $P_q$ is a whole family of indices.

In our model,  for $\frac{1}{2} \neq G \in(0,1) $, the $P_q$ index is equal to:

\begin{align*}
\Pequ_q\left(\boldsymbol{p}^{(N,G)}\right) %
=&
\myequationstyle
\frac{G}{(2G-1)(1-q)}\left(\frac{\Gamma(N)}{\Gamma(N-1+1/G)} \frac{\Gamma(\lfloor qN \rfloor-1+1/G)}{\Gamma(\lfloor qN \rfloor)}
-\frac{(1-G)\lfloor qN\rfloor}{GN}\right)-\frac{q}{1-q}\approx\\
\approx&
\myequationstyle
\frac{G}{(2G-1)(1-q)}\left(\frac{\Gamma(N)}{\Gamma(N-1+1/G)} \frac{\Gamma(\lfloor qN \rfloor-1+1/G)}{\Gamma(\lfloor qN \rfloor)}
-q\right).
\end{align*}

\noindent
Furthermore, if $G=\frac{1}{2}$, then it holds:

\begin{equation}
    \Pequ_q\left(\boldsymbol{p}^{(N,1/2)}\right)
    = \frac{q}{1-q}(H_N-H_{\lfloor qN \rfloor}).
\end{equation}

\noindent

\begin{remark}
In the limit as $N\to\infty$, we have:

\begin{equation*}
    \Pequ_q\left(\boldsymbol{p}^{(N,G)}\right)
    \stackrel{N\to\infty}{\longrightarrow}
    \begin{cases}
    \displaystyle\frac{G}{(2G-1)(1-q)} \left(
q^{1/G-1}
-q
\right), & \myoptionalpart{}{\text{for }}G\neq\frac{1}{2}, \\
    \quad\\
    \displaystyle\frac{-q\log(q)}{1-q}, & \myoptionalpart{}{\text{for }}G=\frac{1}{2}. \\
    \end{cases}
\end{equation*}
\end{remark}

\paragraph{Indices as functions of one another.}\label{subsec:equivalence}

To sum up, we have shown that in our model,
the formulae for the Bonferroni, De Vergottini, Hoover, and $P_q$
indices can be expressed as the following functions of $N$ and $G$:

\begin{align}
    \label{eq:final-equations1}
\tilde{B}(N,G) =& %
\myequationstyle
\frac{
N}{
N-1}\sum_{k=2}^N\frac{1}{k(k+1/G-2)},
\\
    \label{eq:final-equations2}
\tilde{V}(N, G)=& %
\myequationstyle
    \frac{1}{H_N-1}\left(\frac{\Gamma(1/G-2)\Gamma(N)}{\Gamma(N-1+1/G)}+\frac{G}{(2G-1)N}-\frac{G}{G-1}\right),
\\
\tilde{H}(N, G) =& %
\myequationstyle
\frac{N}{N-1} \left(\frac{G}{2G-1}
\frac{\Gamma(N) }{\Gamma(N-1+1/G) }
\frac{\Gamma(\nu-1+1/G)}{\Gamma(\nu)}
-  \frac{\nu}{N}  \right),
\qquad(\text{with }\nu = \max\{j: p_j \ge {1}/{N} \})
    \label{eq:final-equations3}
\\
\tilde{P}_q(N, G) =& %
\myequationstyle
\frac{G}{(2G-1)(1-q)}\left(\frac{\Gamma(N)}{\Gamma(N-1+1/G)} \frac{\Gamma(qN-1+1/G)}{\Gamma(qN)}
-q\right),\label{eq:final-equations4}
\end{align}

\noindent
respectively (for readability, we only included the case of $G\neq 1/2$).
Figure~\ref{fig:GvsOthers} depicts them for different $N$s.

\begin{figure}[bht!]
    \begin{center}
    \includegraphics[width=0.4\textwidth]{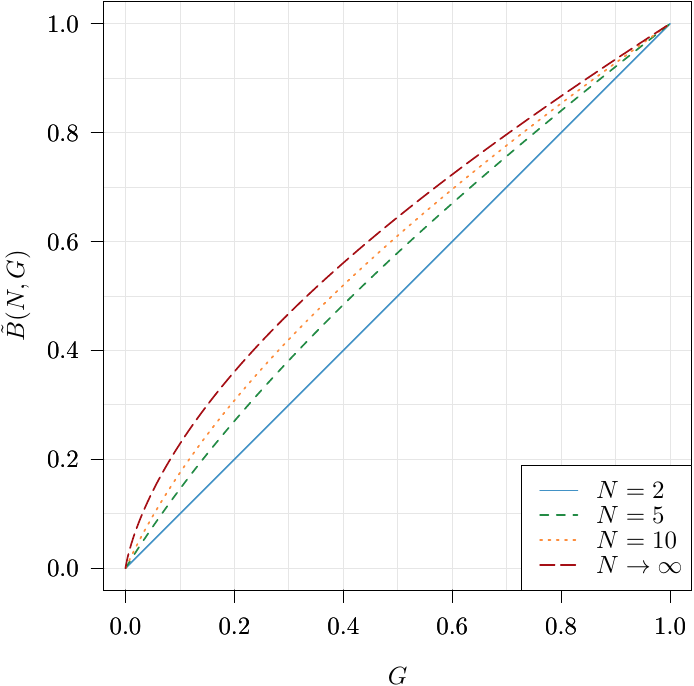}
    \includegraphics[width=0.4\textwidth]{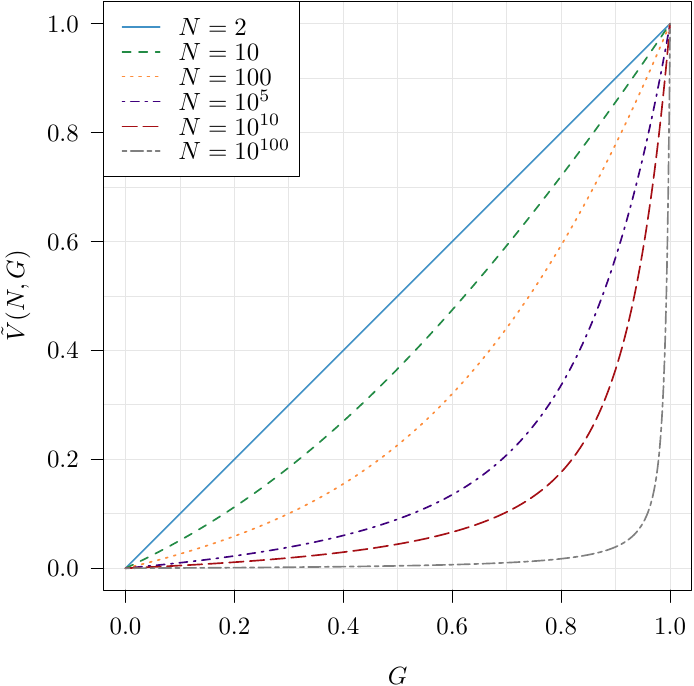}\\
    \vspace{1em}
    \includegraphics[width=0.4\textwidth]{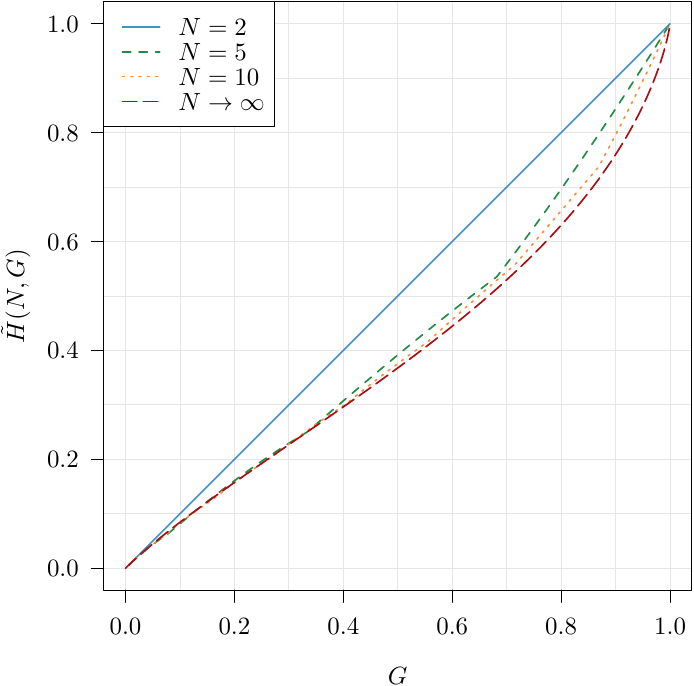}
    \includegraphics[width=0.4\textwidth]{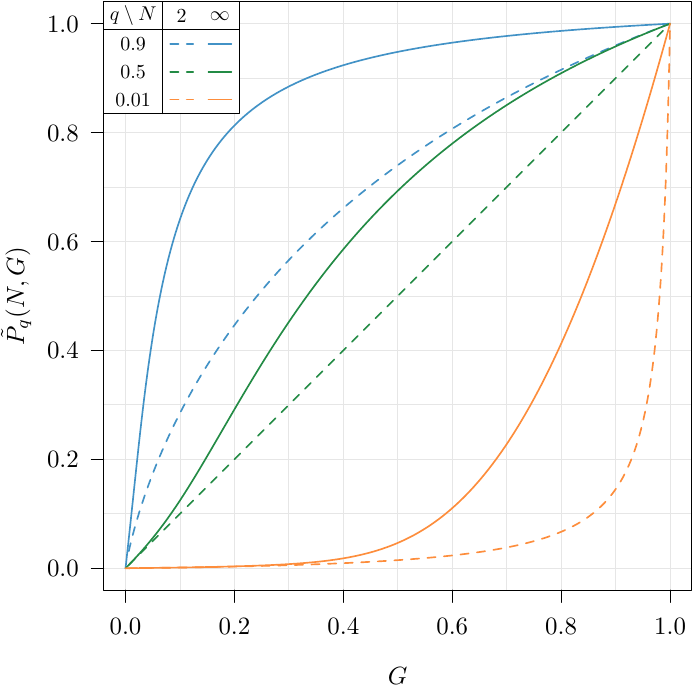}
    \end{center}
    \caption{%
        Functional dependence between the Gini index and other inequality measures
        for different sample sizes $N$ in our model.
        The indexes are one-to-one functions of one another.
        Therefore, similar plots could be drawn for $V$ as a function of $B$,
        $H$ as a function of $P_q$, etc.
    }%
    \label{fig:GvsOthers}
\end{figure}

We see that $G$ has a higher sensitivity to inequality for low levels of the inequality
scale as compared to the $H$, $V$ and $P_q$ for $q$ close to $0$,
and a smaller sensitivity at the opposite end of the inequality scale. This confirms and extends the assessments of \citet{ciommi}.
Also, $B$ and $P_q$ for $q$ close to $1$ evidence a higher
sensitivity to inequality as compared to $G$ for low levels of the
inequality scale, and quite the opposite for high levels of inequality.

We note that the foregoing are all strictly increasing, continuous functions of $G$.
Thus, based on the derived formulae, all the indices can be expressed
as one-to-one functions of one another.
For instance, given some value of the Bonferroni index $B$,
we can obtain the underlying $G^* = \tilde{B}^{-1}(N, B)$
and then compute, say, $\tilde{V}(N, G^*)$.
Even though the analytic formulae for the inverses do not exist,
they can easily be solved numerically.
In this sense, we can say that -- \textit{in our model} --
all the aforementioned inequality indices are equivalent.
Similar derivations can be performed for many other inequality indices,
although they might not necessarily enjoy analytic solutions.

\section{Experiments}\label{sec:experiments}

Looking beyond our simple iterative process,
we know that uncountably many normalised vectors yield
a specific Gini, Bonferroni, or any other index.
After all, these measures were introduced to respond to the different needs
of the practitioners \citep{imedioolmedo,parasite}.
Some of them are, for example, more responsive to
the increasing
of the amount of mass in the tail of the distribution
(via the principle of progressive transfers) than others.
In particular, as reported by \citet{ciommi}, the Gini, Bonferroni,
and De Vergottini indices belong to the class of linear measures introduced by
\citet{mehran}. They note that \textit{for the Bonferroni and De Vergottini
indices, the effect of a transfer also depends on the position of individuals,
making the Bonferroni index more sensitive to transfers that occur at the lower
end of the income distribution and the De Vergottini index more sensitive to
variations among the richest}.

Theoretical models are merely approximations of the real-world phenomena under
scrutiny, but some models are more useful than others.
\citet{OurGiniLorenz} have already noted that the Gini-stable model
provides a good fit to a variety of informetric and other data,
even though the assumption about the Gini index's being constant throughout
all iterations is merely an idealisation, which does not necessarily
have to hold throughout the whole agent's lifespan.

Therefore, we should be interested in verifying how well our formulae
approximate the hidden relationships present in real datasets.
Let us thus consider citation data from the RePEc database
(Research Papers in Economics; see \url{https://citec.repec.org/}), which
features $66{,}347$ authors and $1{,}843{,}967$ papers.
In the data cleansing step, we have omitted the authors who published less than
$5$ cited papers and whose $h$-index was less than $3$ for such samples
are too small to make the analysis robust enough.
This resulted in $n=36{,}425$ citation records of the form
$\left(x_1^{(i)},\dots,x^{(i)}_{N^{(i)}}\right)$, where
$N^{(i)}$ gives the total number of items published by the $i$-th author
and $x_k^{(i)}$ gives the number of citations to their $k$-th most cited works.

Figure~\ref{fig:repec_3vectors_loglog} shows
three example (quite representative of the whole database)
vectors from the RePEc database:
observed (points) and predicted (lines) values for $p_k$ (left)
and their non-normalised versions, $x_k$ (right).
We see a good fit over most parts of the data domain. Testing all vectors
with the discrete Kolmogorov--Smirnov test~\citep{taylor2011R:dgof} resulted in rejecting the null hypothesis that data follow our distribution only in $1\%$ of the cases
(at the significance level of $0.05$).
This comes as no surprise, as the proposed model is equivalent
to the 3DI model~\citep{PNAS2020} and we have already
seen its usefulness in the case of modelling citations to computer science papers. Nevertheless, we should note that the statistical power of the K-S test is not particularly high.

\afterpage{\clearpage}

\begin{figure}[t!]
    \begin{center}
        \includegraphics[width=0.45\textwidth]{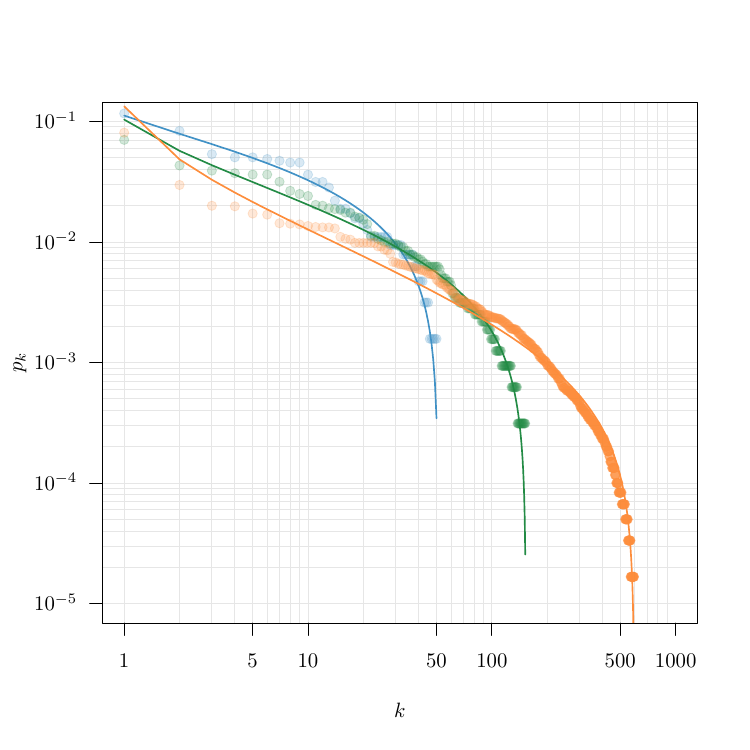}
        \includegraphics[width=0.45\textwidth]{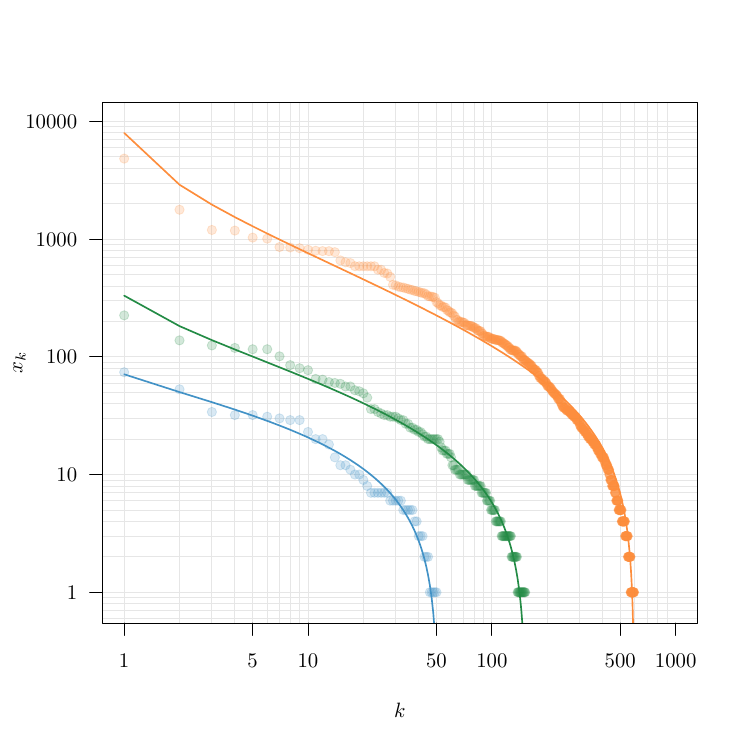}
    \end{center}
    \caption{%
    Three example citation vectors (left: normalised, right: original) and
    the corresponding fitted models (Eq.~(\ref{eq:process}));
    note the log-log scale. We note a very good fit in each case.
    }
    \label{fig:repec_3vectors_loglog}
\end{figure}

For each author, we computed their actual
(observed) Gini index
$\hat{G}^{(i)} = \Gini(p_1^{(i)},\dots,p^{(i)}_{N^{(i)}})$
with $p_k^{(i)} = X_k^{(i)}/\sum_{j=1}^{N^{(i)}} X_j^{(i)}$.
Knowing the value of $\hat{G}^{(i)}$, based on the derived
formulae (see Eqs.~\eqref{eq:final-equations1}--\eqref{eq:final-equations4}),
we computed the predicted values
of the indices:
$\tilde{B}(N, \hat{G}^{(i)})$, $\tilde{V}(N, \hat{G}^{(i)})$, $\tilde{V}(N,
\hat{G}^{(i)})$ and $\tilde{P}_{q}(N, \hat{G}^{(i)})$.
This way, we can compare these approximated values
with the observed ones, i.e.,
$\hat{B}^{(i)} = \Bonf(p_1^{(i)},\dots,p^{(i)}_{N^{(i)}})$,
$\hat{V}^{(i)} = \Verg(p_1^{(i)},\dots,p^{(i)}_{N^{(i)}})$,
$\hat{H}^{(i)} = \Hoov(p_1^{(i)},\dots,p^{(i)}_{N^{(i)}})$, and
$\hat{P}_q^{(i)} = \Pequ(p_1^{(i)},\dots,p^{(i)}_{N^{(i)}})$.

Recall that Figure~\ref{fig:GvsOthers} describes the
theoretical relationships between the Gini index and other metrics.
If, overall,  our model describes the real data well,
we should expect to see these dependencies in the case of the RePEc vectors too.
Figure~\ref{fig:repec_indG_largeN} presents a scatter plot
of the values of different indices
as functions of $\hat{G}$ for all vectors with $N>200$
(the De Vergottini index was not included as it approaches $0$ for large $N$s;
see Remark \ref{remark:vinfty}).
Ideally, they should lie close to the theoretical curves
(depicted as well). And this is approximately the case.

\begin{figure}[p!]
    \begin{center}
        \includegraphics[width=\myoptionalpart{0.95}{0.45}\linewidth]{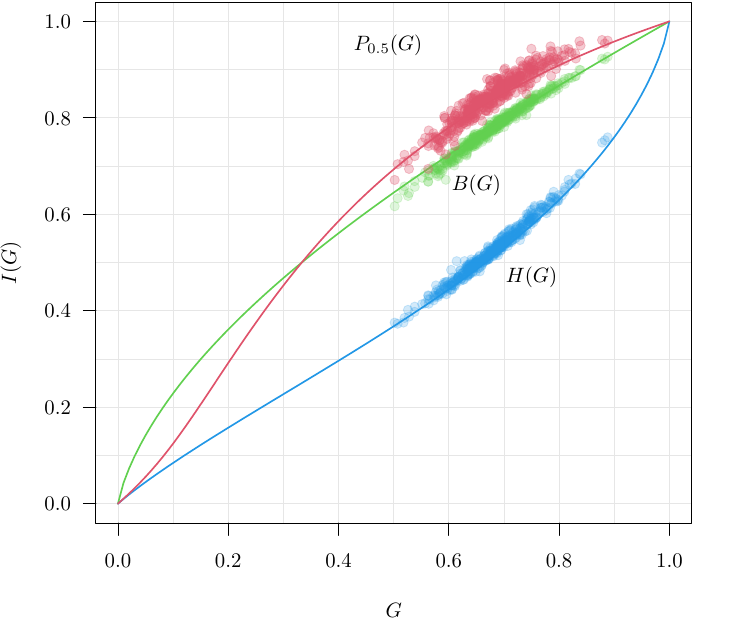}
    \end{center}
    \caption{%
        Inequality indices as functions of the Gini index
        for all RePEc citation vectors with $N>200$ (points).
        They match the derived theoretical curves (lines) quite well,
        confirming the usefulness of our model.
    }%
    \label{fig:repec_indG_largeN}
\end{figure}

\begin{figure}[p!]
    \begin{center}
        \includegraphics[width=0.4\textwidth]{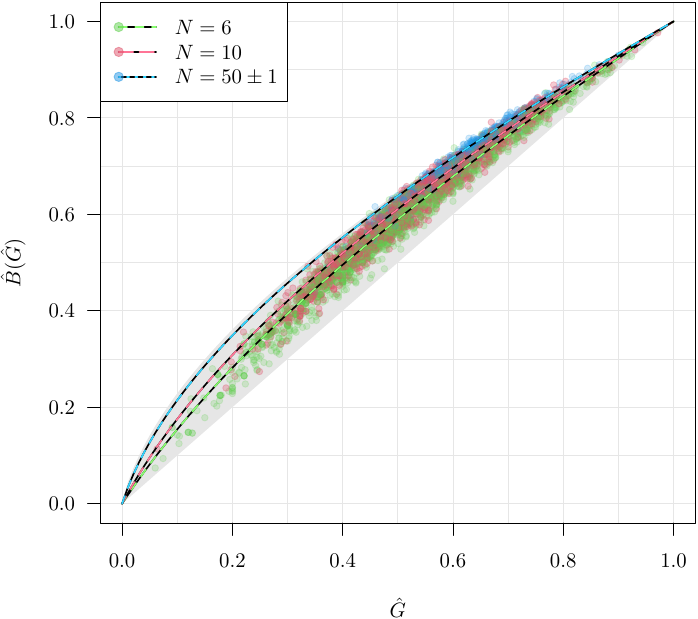}
        \includegraphics[width=0.4\textwidth]{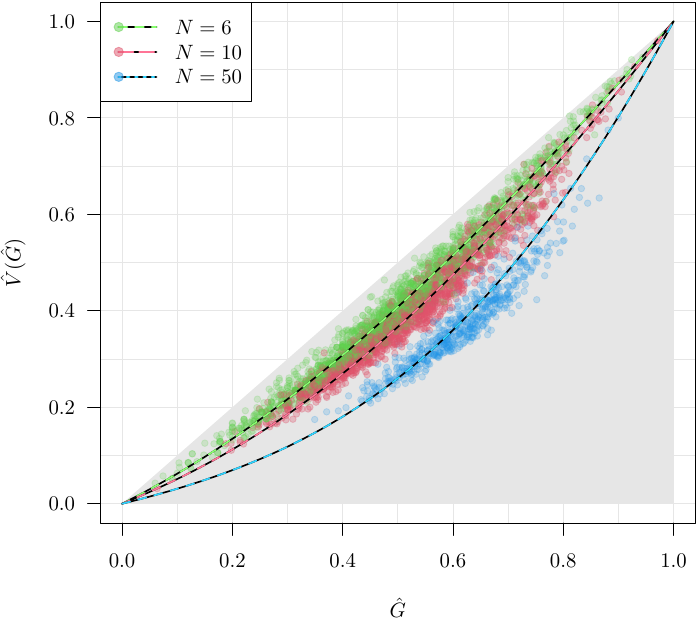}

        \includegraphics[width=0.4\textwidth]{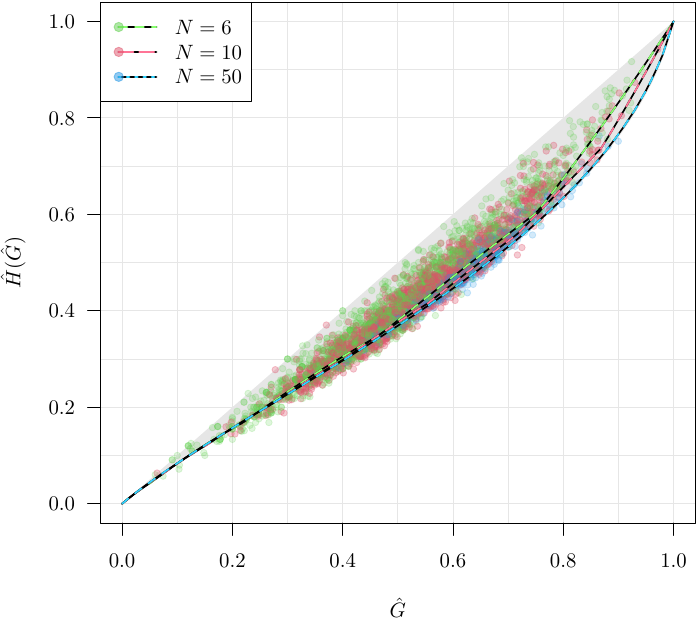}
        \includegraphics[width=0.4\textwidth]{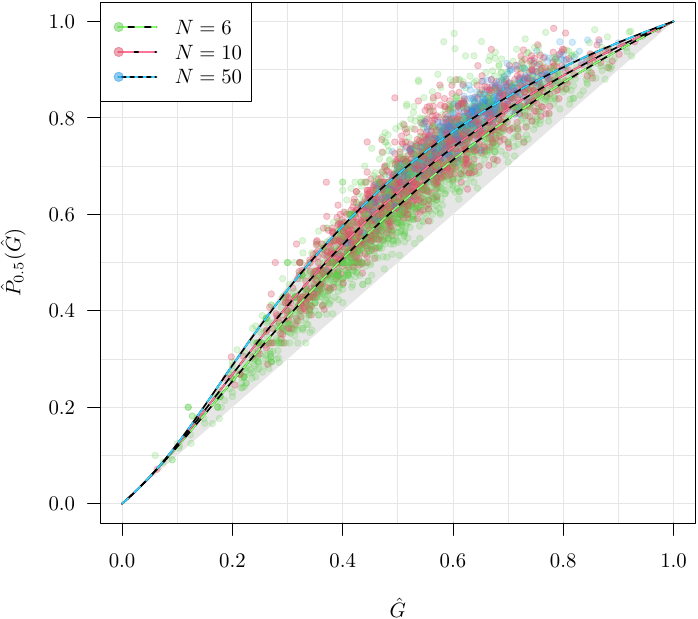}
    \end{center}
    \caption{%
        Predicted ($\tilde{B}(N, \hat{G})$ vs $\hat{G}$,
        $\tilde{V}(N, \hat{G})$ vs $\hat{G}$, etc.; thick curves)
        and
        observed ($\hat{B}$ vs $\hat{G}$ etc.; points)
        values of different inequality measures
        for the RePEc citation records of different lengths $N$.
        Grey areas represent the values which can be obtained from our model.
    }%
    \label{fig:repec}
\end{figure}

Furthermore, Figure~\ref{fig:repec} presents similar results, but for vectors of
lengths $N=6$ (green), $N=10$ (pink), and $N=50\pm 1$ (blue;
the $\pm 1$ part is to increase the number of data points).
Additionally, we coloured the areas representing all of the possible values
which could be obtained for vectors generated from our model. In other words,
for any vector, we expect the pairs $(\hat{G}, \hat{B})$,
$(\hat{G}, \hat{V})$, etc.~to lie in the grey zone.
This is true for the vast majority of the real data points.

\FloatBarrier

\begin{figure}[t!]
    \centering
    \includegraphics[width=0.49\linewidth]{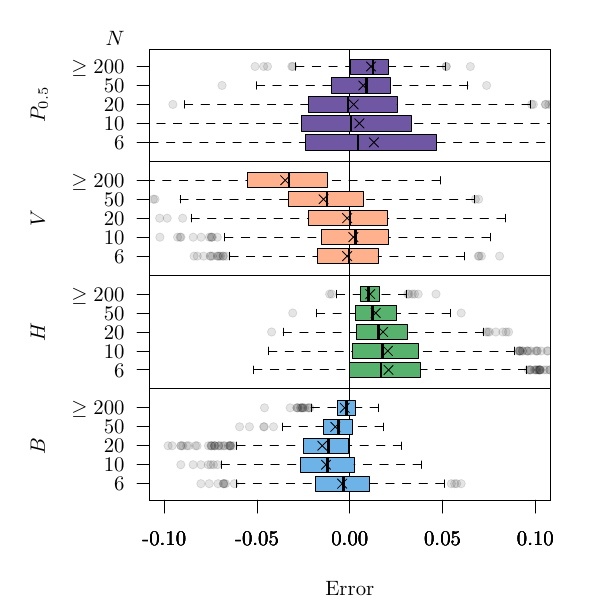}
    \includegraphics[width=0.49\linewidth]{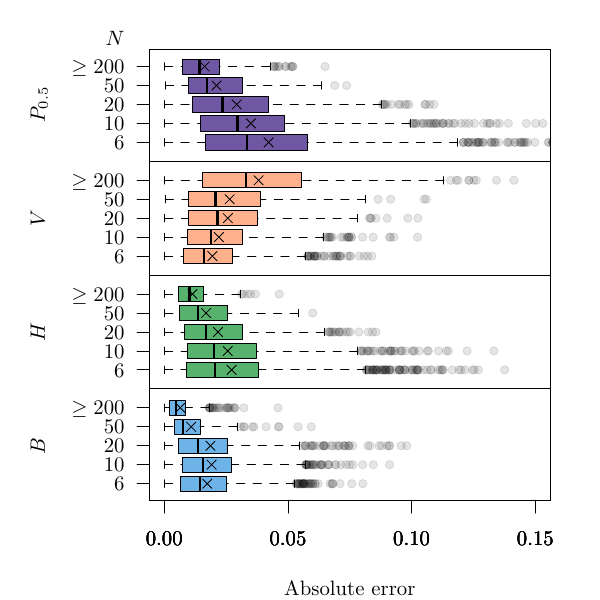}
    \caption{Box plots of prediction errors, i.e., $\hat{I} - \tilde{I}$
    (left) and absolute prediction errors, i.e., $|\hat{I} - \tilde{I}|$
    (right) for different values of $N$ and different indices; RePEc data.}
    \label{fig:repec-errors-boxplot}
\end{figure}

    \begin{table}[t!]
    \centering
    \caption{Spearman correlation coefficients between index
        $\hat{I}\in\{\hat{H}, \hat{B}, \hat{V}, \hat{P}_{0.5}\}$ and
        index $\hat{G}$ calculated directly from the RePEc data for subsets of
        vectors of a given length and for all vectors (last row).}
    \label{tab:repec-spearman-data}
    \begin{tabular}[t]{r|r|r|r|r}
        $N$\textbackslash$\hat{I}$&
        $\hat{H}$&$\hat{B}$&$\hat{V}$&$\hat{P}_{0.5}$ \\\hline
        $5$&\cellcolor[HTML]{FCAB5F}{\textcolor{black}{0.98}}&
        \cellcolor[HTML]{FCA75D}{\textcolor{black}{0.99}}&
        \cellcolor[HTML]{FCA75D}{\textcolor{black}{0.99}}&
        \cellcolor[HTML]{FDAF62}{\textcolor{black}{0.97}}\\\hline
        $10$&\cellcolor[HTML]{FCAB5F}{\textcolor{black}{0.98}}&
        \cellcolor[HTML]{FCAB5F}{\textcolor{black}{0.98}}&
        \cellcolor[HTML]{FCAB5F}{\textcolor{black}{0.98}}&
        \cellcolor[HTML]{FDB869}{\textcolor{black}{0.94}}\\\hline
        $15$&\cellcolor[HTML]{FCAB5F}{\textcolor{black}{0.98}}&
        \cellcolor[HTML]{FCAB5F}{\textcolor{black}{0.98}}&
        \cellcolor[HTML]{FDAF62}{\textcolor{black}{0.97}}&
        \cellcolor[HTML]{FDB567}{\textcolor{black}{0.95}}\\\hline
        $20$&\cellcolor[HTML]{FCAB5F}{\textcolor{black}{0.98}}&
        \cellcolor[HTML]{FCAB5F}{\textcolor{black}{0.98}}&
        \cellcolor[HTML]{FDAF62}{\textcolor{black}{0.97}}&
        \cellcolor[HTML]{FDB567}{\textcolor{black}{0.95}}\\\hline
        $50$&\cellcolor[HTML]{FCAB5F}{\textcolor{black}{0.98}}&
        \cellcolor[HTML]{FCA75D}{\textcolor{black}{0.99}}&
        \cellcolor[HTML]{FDB264}{\textcolor{black}{0.96}}&
        \cellcolor[HTML]{FDBB6C}{\textcolor{black}{0.93}}\\\hline
        $100$&\cellcolor[HTML]{FCA75D}{\textcolor{black}{0.99}}&
        \cellcolor[HTML]{FCA75D}{\textcolor{black}{0.99}}&
        \cellcolor[HTML]{FDB264}{\textcolor{black}{0.96}}&
        \cellcolor[HTML]{FDAF62}{\textcolor{black}{0.97}}\\\hline
        all&\cellcolor[HTML]{FDAF62}{\textcolor{black}{0.97}}&
        \cellcolor[HTML]{FCAB5F}{\textcolor{black}{0.98}}&
        \cellcolor[HTML]{FDDE89}{\textcolor{black}{0.82}}&
        \cellcolor[HTML]{FDBF6F}{\textcolor{black}{0.92}}\\\hline
    \end{tabular}

\end{table}

The prediction errors for different vector lengths are
summarised in Figure~\ref{fig:repec-errors-boxplot}.
The absolute values of errors, i.e., $|\hat{I} - \tilde{I}|$ are usually less
than $0.02$--$0.03$. The errors themselves (bias) are small as well.

We can also be interested in the way they order
the vectors of interest.
To test if all indices give similar rankings, we can calculate the
Spearman correlation coefficient between them.
Table~\ref{tab:repec-spearman-data} presents the
correlations between $\hat{G}$ and $\hat{I}\in\{\hat{H}, \hat{B}, \hat{V},
\hat{P}_{0.5}\}$ for subsets of vectors of a given length.
For vectors of equal lengths,
the obtained rankings are very similar. If we take a look at the whole
data set (the last row), the correlations are still high and the biggest
difference can be seen for index $V$. It
is however not surprising as its dependence on $G$  varies greatly
for different values of $N$.

Overall, we obtained a good fit to our theoretical derivations.
For data approximately following our model, we may approximately assume
that when we consider vectors of similar lengths,
the choice of the inequality measure is secondary, as the indices can be
considered functions of one another. Therefore, in such a case,
we can be faithful to the simplest indicator: the Gini index.

\clearpage

\section{Conclusions}\label{sec:conclusion}

It may be interesting to note that we can easily generalise our Gini-stable model $i\rightarrow p_i^{(N,G)}$, $i=1,\,2,\,\dots,N$,  to take the production factor into account by writing $\boldsymbol{y}=C \boldsymbol{p}^{(N,G)}$, $C>0$.  While $p^{(N,G)}_i$ may be interpreted as the relative number of items in the $i$-th  cell of the IPP, $y_i=C p_i^{(N,G)}$ represents its corresponding absolute value. For example, the vector $\boldsymbol{y}$  may express the number of citations
earned by $N$ articles. In that case, $C$ is the total number of citations. Hence, both elements of the theory on impact by \citet{egghe2022rank,egghe2023global} are independently present in the array model $\boldsymbol{y}=C \boldsymbol{p}^{(N,G)}$.

The discussed process yields $\boldsymbol{p}^{(N,G)}$ that are totally ordered by the majorisation relation $\preceq$. Following \citet{shorrocks1987transfer} (but see also \citealp{marshall2011majorisation,lambert2001distribution,chantreuil2011inequality,zheng2007unit,bosmans2016consistent}), a function $I$ is an index of inequality if and only if it is symmetric and strictly Schur convex.

Hence, for all ordered normalised $N$-vectors
$\boldsymbol{p}\neq\boldsymbol{q}$, if $\boldsymbol{p}\preceq \boldsymbol{q}$,
then $I(\boldsymbol{p})<I(\boldsymbol{q})$, where the direction of the
inequality is uniquely determined by the Pigou–Dalton condition (e.g.,
\citealp{Patty2019}).
In particular, for every vector $\boldsymbol{p}^{(N,G)}$, the parameter $G$ can be interpreted as the
(normalised) Gini index $\Gini(\boldsymbol{p}^{(N,G)})=G$. This implies that $\Gini$ is an order preserving function \citep{marshall2011majorisation}  on $\{\boldsymbol{p}^{(N,G)}\}$ with respect to $G$.

The main requirement to impose to a measure of concentration (inequality) to be ``acceptable'' is that to be coherent to the Lorenz ordering. By construction, the Gini index is acceptable because it is coherent with the Lorenz ordering \citep{egghe2010hirsch}. But other indices are equally acceptable as measures of concentration. Examples of such indices are $B$, $H$, $V$ and $P_q$. How can we compare them with each other? At least within the case of distributions belonging to the Gini-stable family, we showed that it is possible to derive their explicit expressions as one-to-one functions of $G$ and $N$. This allows us to distinguish and compare all these measures of concentration according to their degree of sensitivity to inequality.

For data vectors of similar sizes that follow closely our model, the indices can be considered equivalent.
An analysis of an empirical dataset (citation vectors in economics)
confirmed our results. Of course, we need to keep in mind that our model
is not universal: \citet{OurGiniLorenz} presented the instances
of both fair and poor fits.

The paper focused on an analysis of the Gini-stable process and relations
between inequality measures in it. Even though the empirical data fit the vectors
generated by this model reasonably well, there are some questions that have not been answered yet. In particular, the analysed process assumes that the vectors have the same Gini index over their whole lifespan. The extent to which this idealisation is correct, at least approximately, is yet to be determined, although data that could be used to validate this hypothesis is difficult to obtain. We also assume that wealth is being distributed equally over time, which is also an
assumption that should be tested on real data.

\subsection*{Acknowledgements}
We are indebted to Jose Manuel Barrueco for providing us with a large snapshot
of RePEc (Research Papers in Economics) data.
All data are freely available at \url{http://citec.repec.org/api.html}.

This research was supported by the Australian Research Council Discovery
Project ARC DP210100227 (MG).

We thank the anonymous reviewers whose remarks led to the improvement
of the manuscript.

\subsection*{Conflict of interest}
The authors certify that they have no affiliations with or involvement in any
organisation or entity with any financial interest or non-financial interest
in the subject matter or materials discussed in this manuscript.

\printcredits  %

\bigskip
\noindent
Please cite this paper as:
Bertoli-Barsotti, L., Gagolewski, M., Siudem, G., Żogała-Siudem, B., Equivalence of inequality indices in the three-dimensional model of informetric impact, \textit{Journal of Informetrics} \textbf{18}(4), 101566, 2024, \href{https://doi.org/10.1016/j.joi.2024.101566}{DOI:10.1016/j.joi.2024.101566}.

\end{document}